\journal{Systems \& Control Letters}
\DeclareMathOperator{\trace}{trace}
\renewcommand{\vec}[1]{\mathbf{#1}}
\newtheorem{theorem}{Theorem}
\newtheorem{proposition}{Proposition}
\begin{document}

\begin{frontmatter}

\title{Augmented State Feedback for Improving Observability of\\
Linear Systems with Nonlinear Measurements}

\author{Atiye~Alaeddini}
\address{Institute for Disease Modeling, Bellevue, WA, 98005}

\author{Kristi~A.~Morgansen}
\author{Mehran~Mesbahi}
\address{University of Washington, William E. Boeing Department of Aeronautics and Astronautics,\\ Seattle, WA 98195}

%
%

\begin{abstract}
This paper is concerned with the design of an augmented state feedback controller for finite-dimensional linear systems with nonlinear observation dynamics. Most of the theoretical results in the area of (optimal) feedback design are based on the assumption that the state is available for measurement.
In this paper, we focus on finding a feedback control that avoids state trajectories with undesirable observability properties. 
In particular, we introduce an optimal control problem that specifically considers an index of observability in the control synthesis. The resulting cost functional is a combination of LQR-like quadratic terms and an index of observability. The main contribution of the paper is presenting a control synthesis procedure that on one hand, provides closed loop asymptotic stability, and addresses the observability of the system--as a transient performance criteria--on the other.
\end{abstract}

\begin{keyword}
Nonlinear observations\sep linear state feedback\sep observability\sep stability
\end{keyword}

\end{frontmatter}


\section{Introduction}

A primary consideration in the control synthesis procedure examined in this
paper is system observability. 
This is motivated by the inherent coupling in the
actuation and sensing in nonlinear systems; as such, one might be able to make a system more observable by changing the control inputs \cite{Alaeddini13,hinson2013observability,alaeddini2014trajectory}. Here, we consider a particular type of nonlinear systems, namely, one with linear  state dynamics paired with a nonlinear observation dynamics. 
This is motivated by the fact that in some practical problems, nonlinearities arise from the system measurements. An example of this type of systems is a simplified particle model of a vehicle with a camera or sonar with range-only/bearing-only measurements, such as an autonomous underwater vehicle (AUV). The measurements for these vehicles are usually based on acoustic localization \cite{Quenzer14}. In this case, acoustic transducers are installed on the AUV and on a set of other beacons. The AUV has the exact global location of the beacons and can determine its range to one (Single Beacon Navigation \cite{Baccou02}), three (Long-Baseline Localization \cite{Larsen00}), or more separate beacons in order to determine its position. Another example of nonlinear observation is bearing-only measurement, e.g.,~\cite{nardone1981observability}. In this case, the vehicle only has information about its direction but not about the distance from reference points. A modeling of a vehicle with azimuth bearing, conical bearing, and depth/elevation angle measurements can be found in \cite{Hammel85}.

For linear systems, Linear Quadratic Regulation (LQR) is a well known and commonly used method for designing an optimal control law given a quadratic cost function of state and control. One potential problem may be that while the LQR problem minimizes the cost functional, it may not result in good response when the measurement equation does not facilitate the use or reconstruction of the state for feedback. 
In this paper we modify the cost functional in standard LQR and suggest an augmented control in order to improve the observability of the baseline optimal controller. As such, we show that we can avoid unobservable trajectories by augmenting decaying oscillatory terms with the  LQR-induced state feedback, without effecting the asymptotic stability and performance of the baseline controller.

Our goal in this work is proposing a synthesis procedure for efficiently choosing the inputs of a nonlinear system to improve its overall sensing performance (measured by the observability Gramian).
This task is motivated through problems where the reconstruction of the state proves to be challenging and it becomes imperative to actively acquire, sense, and estimate it; an example
of such a problem is robot localization operating in an unknown environment, when the robot only has access to range/bearing measurements \cite{lorussi2001optimal}. 
%
%
In this context, a robot can use a path that leads to more effective 
localization of itself or to  construct a map of the environment \cite{lorussi2001optimal,DeVries13}. 
For example, in \cite{DeVries13}, sampling trajectories for autonomous vehicles are selected from a finite sampling of trajectories through an exhaustive search. 
%
%
On the other hand, the work \cite{lorussi2001optimal} discusses the geometry of trajectories that achieves maximum information for a model of the rover robot.
Yu {\em et al.}~\cite{Yu11} have developed a path planning algorithm based on dynamic programming for navigation of a micro aerial vehicle (MAV) using bearing-only measurements. 
In their proposed algorithm, at each time step, one of the three possible choices of roll command--that is more observable--is selected. 

The observability-based optimal controller design presented in this work is closely related to the analysis discussed in \cite{bohm2008avoidance} and \cite{alessandretti2013model} to avoid unobservable state/output trajectories. In both works, an index of observability is added to the cost functional. The index of observability used in both of these works are inspired by the observability matrix for linear systems. The observability rank condition for nonlinear systems, given in \cite{hermann1977nonlinear,sontag1991equations}, depends on the rank of the span of the observation space of the nonlinear system. In general, the observability matrix of a nonlinear system is the span of the Lie derivatives of the output function in all possible directions generated from the drift and control vector fields. In this work, we introduce an index of observability that is closely related to the observability Gramian. Unlike the observability matrix which gives a yes/no answer to the observability question, the observability Gramian parameterizes the level of observability for linear/nonlinear system.

In comparison with previous observability-based path planning algorithms reported in the literature, the controller obtained in this paper can be applied to a larger family of nonlinear systems (with a general form of the nonlinear observation function); moreover, the obtained controller ensures closed loop stability. In particular, utilizing the concept of observability Gramian, and tools from linear optimal control theory, this paper develops a synthesis procedure for a modified version of the optimal linear state feedback with stability guarantees. The objective function examined in this paper is quadratic in state and control, augmented with an observability measure. One of our contribution is presenting an observability index, whose optimization does not have adverse effects on the asymptotic stability properties of the system.
\section{Preliminaries on Observability of Nonlinear Systems}
\label{sec:prelim}

Consider the linear time invariant system of the form
\begin{eqnarray}
\begin{aligned}
   \dot{\vec{x}}& = A \vec{x} + B \vec{u}  \\
   \vec{y} &= C \vec{x} \,,  \label{linear}
\end{aligned} 
\end{eqnarray}
for which the observability Gramian,
\begin{equation}
	W_{o,L} = \int_{0}^{t_f} e^{A^Tt} C^T C e^{At} \mathrm{d}t \,, \label{LinGram}
\end{equation}
can be computed to evaluate the observability of a linear system (linear dynamics and linear observation) \cite{Krener09}. 
A linear system is (fully) observable if and only if the corresponding observability Gramian is full rank \cite{muller1972analysis}. In the linear setting, if the observability Gramian is rank deficient, certain states (or directions in the state space) cannot be reconstructed from the measurement, regardless of the control policy being applied.
%
%
While the observability Gramian works well for determining the observability of linear systems, analytical observability conditions for nonlinear systems quickly become intractable, necessitating simplifications
for computational tractability.
One such approach is the evaluation of nonlinear observability Gramian through linearization
and using the corresponding Jacobian matrices for linear observability analysis; however, this approach only provides
an approximation of the local observability for a specific trajectory. 
One alternative method to evaluate the observability of a nonlinear system is using the concept of observability covariance or the empirical observability Gramian \cite{Krener09}. 
This approach provides a more accurate description of a nonlinear system's observability (compared with linearization), while being computationally less expensive to apply compared with say, Lie algebraic based methods. 
To set the stage for the main contribution of the paper, 
let us first provide a brief overview of the empirical observability Gramian and how
it is used to evaluate the observability of nonlinear systems.

Consider the problem of system observability for the nonlinear system,
\begin{empheq}[left=\empheqlbrace ]{align}
 &    \dot{\vec{x}}= \vec{f} (\vec{x} , \vec{u}), \ \  \vec{x} \in \mathbb{R}^n, \ \  \vec{u} \in \mathbb{R}^p, \nonumber \\
 &	 \vec{y}=\vec{h}(\vec{x}),  \ \ \vec{y} \in \mathbb{R}^m. \label{UAffine}
\end{empheq}
The empirical observability Gramian for the system \eqref{UAffine} is constructed as follows. For $\epsilon > 0$, let $\vec{x}_0^{\pm i} = \vec{x}_0 \pm \epsilon \vec{e}_i$ be the perturbations of the initial condition and $\vec{y}^{\pm i}(t)$ be the corresponding output, where $\vec{e}_i$ is the $i^{\text{th}}$ unit vector in $\mathbb{R}^n$. For the system \eqref{UAffine}, the {\em empirical observability Gramian} $W_o$, is an $n \times n$ matrix, whose $(i,j)$ entry is 
\begin{equation}
	W_{o_{ij}} =\frac{1}{4\epsilon^2} \int_{0}^{t_f} \left(\vec{y}^{+i}(t)-\vec{y}^{-i}(t) \right)^T \left(\vec{y}^{+j}(t)-\vec{y}^{-j}(t) \right) \mathrm{d}t. \label{EmpObsGram}
\end{equation}
It can be shown that the empirical observability Gramian converges to the local observability Gramian as $\epsilon \to 0$ \cite{Krener09}. Note that the perturbation, $\epsilon$, should always be chosen such that the system stays in the region of attraction of the equilibrium point of the system. The largest singular value \cite{Singh05}, the smallest eigenvalue \cite{Krener09}, the determinant \cite{DeVries13, Serpas13}, and the trace of the inverse \cite{DeVries13} of the observability Gramian have all been used as different measures for the (nonlinear) system observability.

By using the definition given in \eqref{EmpObsGram}, the trace of the empirical observability Gramian can be written as
\begin{equation}
	\trace(W_o) = \frac{1}{4\epsilon^2} \int_{0}^{t_f} \sum\limits_{i=1}^n \left\| \vec{h}(\vec{x}^{+i} (t))-\vec{h}(\vec{x}^{-i}(t)) \right\|^2 \mathrm{d}t\,, \label{TraceGram}
\end{equation}
where, $\vec{x}^{\pm i} (t)$ is the trajectory corresponding to the initial condition $\vec{x}_0^{\pm i} = \vec{x}_0 \pm \epsilon \vec{e}_i$.


\section{Problem Formulation}
\label{sec:prob}

Consider a system with nonlinear observation,
\begin{empheq}[left=\empheqlbrace ]{align}
 &    \dot{\vec{x}}=A \vec{x} + B \vec{u}, \ \  \vec{x} \in \mathbb{R}^n, \ \  \vec{u} \in \mathbb{R}^p, \nonumber \\
 &	 \vec{y}=\vec{h}(\vec{x}),  \ \ \vec{y} \in \mathbb{R}^m. \label{LTI_NL_h}
\end{empheq}
The linear-quadratic optimal control (known as LQR) aims at minimizing the cost functional
\begin{equation}
\begin{aligned}
& \underset{\vec{x}, \vec{u}}{\text{min}}
& & J = \int_{0}^{\infty} \left( \vec{x}^T Q \vec{x} + \vec{u}^T R \vec{u} \right) \mathrm{d}t\,,
\end{aligned}
\end{equation}
with
\begin{equation}
	Q = Q^T \succ 0, \ \  R = R^T \succ 0\,,
\end{equation}
and the linear dynamics \eqref{LTI_NL_h} represents the constraints. The solution of LQR is the linear state feedback
\begin{equation} \label{u_LQR}
	\vec{u}_{\text{LQR}} = -R^{-1} B^T P \vec{x} \,,
\end{equation}
where $P$ is the positive definite solution of the Riccati equation,
\begin{equation} \label{CARE}
	Q + A^T P + P A - P B R^{-1} B^T P = 0\,.
\end{equation}
The LQR control places the poles of $\bar{A}$ in a stable, suitably-damped locations in the complex plane, where $\bar{A}$ is the ``$A$-matrix'' of the closed loop system, and in this case $\bar{A} = A - B R^{-1} B^T P$. Picking the Lyapunov function $V(\vec{x}) = \vec{x}^T P \vec{x}$, we have,
\begin{equation}
\begin{aligned}
	V(\vec{x}) &> 0, \ \  \mbox{for all} \; \vec{x} \neq 0, \\
	\dot{V}(\vec{x}) &= \vec{x}^T P \dot{\vec{x}} + \dot{\vec{x}}^T P \vec{x} = \vec{x}^T (-Q - P B R^{-1} B^T P) \vec{x}<0, \ \  \mbox{for all} \; \vec{x} \neq 0\,.
\end{aligned}
\end{equation}
Hence, the optimal LQR control \eqref{u_LQR} asymptotically stabilizes the origin.

One potential problem may be that while the LQR problem minimizes the above cost functional, it may not result in a good practical response, e.g., when the obtained trajectory is not observable- in this case, we are not able to reconstruct the states required for state feedback. 
In fact, since the output function is nonlinear, the linearization of the output function $\vec{h}(\vec{x})$ about the optimal LQR trajectory could be unobservable (or poorly observable). In this section, our goal is modifying the LQR control such that we avoid unobservable trajectories, while ensuring that the closed loop system is stable. In order to preserve the asymptotic convergence of the closed loop state trajectory, we consider an LQR augmented control design setup. 
In such a setup the control objective considered is twofold:
\begin{itemize}
	\item Improve the observability of the system in order to avoid unobservable trajectories, and
	\item Guarantee the convergence of the resulting state trajectory to the equilibrium of the linear dynamics, i.e., the origin.
\end{itemize}


The suggested controller assumes the form,\footnote{We note that if in fact, the sole purpose enhancing observability is state feedback, then the proposed controller needs be implemented using the estimated state instead; see \S 5 for the example. For the purpose of this paper- however, we focus on augmented control synthesis for observability enhancement assuming the availability of the state.}
\begin{equation} \label{TV_opt_ctrl} 
	\vec{u}(\vec{x})  = -\left[ R^{-1} + S(\vec{x}) \right] B^T P \vec{x} \,,
\end{equation}
where, $S$ is a diagonal matrix with diagonal elements, $$S_{ii}(\vec{x}) = k_{1_i} e^{-\frac{k_{2_i}}{\|\vec{x}\|}} \sin^2(k_{3_i} \|\vec{x}\| + k_{4_i}), \ \ k_{1_i} \geq  0, \ \ k_{4_i} \in [0, \frac{\pi}{2}), i=1,\hdots, p,$$ and $P$ is the solution of the Riccati equation~\eqref{CARE}. 

The justification for the proposed augmentation of the state feedback obtained from LQR for enhancing observability is as follows. First, from previous research on under-sensed nonlinear systems, 
it is known that desirable trajectories for improving (nonlinear) observability are sinusoidal~\cite{lorussi2001optimal,hinson2013path}. 
In fact, sinusoidal control inputs achieve motions in Lie bracketing directions that are generally required for having full rank observability matrix \cite{teel1995non,murray1993nonholonomic}. Lastly, the decaying term ensures that as we approach the origin, we recover the baseline LQR controller.

Let us first establish the stability of the proposed control scheme.

\begin{proposition} \label{addTerm}
Given the nonlinear system with a controllable linear dynamics and nonlinear observation \eqref{LTI_NL_h}, 
the control given by \eqref{TV_opt_ctrl} asymptotically stabilizes the system. 
\end{proposition}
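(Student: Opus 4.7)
The plan is to reuse the Lyapunov function from the standard LQR analysis, namely $V(\vec{x}) = \vec{x}^T P \vec{x}$, and show that the augmentation $S(\vec{x})$ only contributes a nonnegative term to $-\dot{V}$, so the negative definiteness of $\dot{V}$ survives intact.

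First, I would substitute the proposed control law \eqref{TV_opt_ctrl} into the dynamics \eqref{LTI_NL_h} to get the closed-loop system
\begin{equation*}
\dot{\vec{x}} = \bar{A}\vec{x} - B\,S(\vec{x})\,B^T P\vec{x}, \qquad \bar{A} = A - B R^{-1} B^T P.
\end{equation*}
Differentiating $V$ along trajectories and then invoking the Riccati identity \eqref{CARE} to eliminate $A^T P + PA$, the cross terms collapse and I expect to arrive at
\begin{equation*}
\dot{V}(\vec{x}) = -\vec{x}^T\bigl(Q + P B R^{-1} B^T P\bigr)\vec{x} - 2\,\vec{x}^T P B\,S(\vec{x})\,B^T P\vec{x}.
\end{equation*}
The first term is precisely the negative definite expression that already appears in the LQR Lyapunov analysis, so all that remains is to argue that the second term is nonpositive for every $\vec{x}$.

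The key observation is that $S(\vec{x})\succeq 0$. Because $S$ is diagonal, it suffices to inspect a single entry: each $S_{ii}(\vec{x})$ is the product of the nonnegative constant $k_{1_i}$, the strictly positive factor $e^{-k_{2_i}/\|\vec{x}\|}$ (well-defined for $\vec{x}\neq 0$), and the nonnegative factor $\sin^2(k_{3_i}\|\vec{x}\|+k_{4_i})$. Hence $S_{ii}(\vec{x})\ge 0$, and consequently $B\,S(\vec{x})\,B^T\succeq 0$ and $P B\,S(\vec{x})\,B^T P\succeq 0$. This forces the extra term in $\dot{V}$ to be nonnegative after the sign flip, so that $\dot{V}(\vec{x})<0$ for all $\vec{x}\neq 0$. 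Lyapunov's direct method then delivers asymptotic stability of the origin.

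The one subtle point I would flag is behavior near the origin, where $e^{-k_{2_i}/\|\vec{x}\|}$ appears singular. Since this factor vanishes as $\|\vec{x}\|\to 0$, each $S_{ii}$ extends continuously to $S_{ii}(0)=0$, so the feedback reduces to the standard LQR law there and the closed-loop vector field is continuous (indeed smooth away from the origin), which is enough to justify local existence of solutions needed for the Lyapunov argument. This is the only place I would expect a referee to push back, so I would include a brief remark confirming the continuous extension; otherwise the proof is a direct Lyapunov computation anchored to the Riccati equation and the diagonal structure of $S$.
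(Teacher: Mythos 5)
Your proposal is correct and follows essentially the same route as the paper: the same Lyapunov function $V(\vec{x}) = \vec{x}^T P \vec{x}$, the same use of the Riccati equation to obtain $\dot{V} = -\vec{x}^T\bigl(Q + PBR^{-1}B^TP + 2PBS B^TP\bigr)\vec{x}$, and the same observation that $S \succeq 0$ because its diagonal entries are products of nonnegative factors. Your added remark on the continuous extension of $S_{ii}$ at the origin is a small refinement the paper omits, but it does not change the argument.
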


\begin{proof} 
As the linear state dynamics is controllable, 
the control $\vec{u}_0(\vec{x}) = - R^{-1} B^T P \vec{x}$ asymptotically stabilizes the system \eqref{LTI_NL_h}. Now, consider the Lyapunov function $V(\vec{x}) = \vec{x}^T P \vec{x}$ with the control $\vec{u}(\vec{x})$ given in \eqref{TV_opt_ctrl}. Note that the Lyapunov function is positive definite. Furthermore,
\begin{equation}
	\begin{aligned}
	 	\dot{V}(\vec{x}) &= \vec{x}^T P \dot{\vec{x}} + \dot{\vec{x}}^T P \vec{x} = \vec{x}^T \left( -Q - P B R^{-1} B^T P - 2 P B S B^T P \right) \vec{x}\,.
	\end{aligned}
\end{equation}
Since $S_{ii} \geq 0 \,  (i=1, \cdots, p)$, the matrix $S$ is a positive semi-definite. Since we have $R^{-1} \succ 0$, then, $\dot{V} < 0$ for all nonzero $\vec{x}$
proving the statement of the theorem.

\end{proof} 
\section{Selection of Parameters for the Augmented Controller}
\label{ch6:sec:opt}

We now consider the following augmented cost functional,
\begin{equation}
\begin{aligned}
& \underset{\vec{x}, \vec{u}}{\text{min}}
& & J = \int_{0}^{t_f} \left( l_1(\vec{x},\vec{u})- w l_2(t,\vec{x},\vec{x}^{\pm 1}, \cdots, \vec{x}^{\pm n}) \right) \mathrm{d}t\,,  
\end{aligned} \label{CostToGo}
\end{equation}
where,
\begin{equation} 
\begin{aligned}
	l_1(\cdot) &= \vec{x}^T Q \vec{x}+\vec{u}^TR\vec{u}\,,\\
	l_2(\cdot) &= \frac{e^{-\alpha t}} {4\epsilon^2} \sum\limits_{i=1}^n  \left\| \vec{h}(\vec{x}^{+i})-\vec{h}(\vec{x}^{-i}) \right\|^2\,, \label{l1_l2_costs} 
\end{aligned} 
\end{equation}
and $w$ is a scalar determining the weight of the observability term ($l_2$).
The proposed cost functional consists of two parts: 
\begin{itemize}
\item The first term, $l_1(\vec{x},\vec{u})$, takes into account the control energy and the deviation from the desired steady state trajectory; in our setup we assume that $Q$ and $R$ are positive definite. 
\item The observability index, $l_2(t,\vec{x},\vec{x}^{\pm 1}, \cdots, \vec{x}^{\pm n})$, is a discounted version of the trace of the observability Gramian (given in \eqref{TraceGram}), which is a transient term, and takes into account the local observability of the system.
\end{itemize}

In the cost function \eqref{CostToGo}, the sum of the control effort (i.e., the integral of the $\vec{u}^T R \vec{u}$ term) and the deviation from the desired trajectory (i.e., the integral of the $\vec{x}^T Q\vec{x}$ term) is minimized, while maximizing an observability index. 
The index $l_1(\vec{x},\vec{u})$ determines the asymptotic behavior of the closed loop system, while $l_2(t,\vec{x},\vec{x}^{\pm 1}, \cdots, \vec{x}^{\pm n})$ specifies the desired transient behavior of the system. This cost function does not directly maximize observability.
Instead, the observability term, $l_2(t,\vec{x},\vec{x}^{\pm 1}, \cdots, \vec{x}^{\pm n})$, tunes the cost function so that the obtained optimal control makes the system more observable. We note that maximizing the index of observability in the absence of the remaining terms does not guarantee the stability of system. 

Although we have showed that the suggested control does not destabilize the system, however, our goal is that the sum of the control effort and the states cost ($l_1(\cdot)$) is the dominant term, if possible. As mentioned earlier, the main task of the observability term ($w l_2(\cdot)$) is tuning the optimal trajectory. In the next theorem, we will show there always exist the parameters ($w$ and $\alpha$) to achieve this goal.

\begin{theorem} \label{pos_cost}

Given $l_1$ and $l_2$ cost functions in \eqref{l1_l2_costs}, if the output function, $\vec{h}(\vec{x})$, is Lipschitz continuous, and the closed loop system is stable, then there always exist $w$ and $\alpha$ such that for all $t$ and nonzero $\vec{x},\vec{u},$ we have $l_1(\cdot) - w l_2(\cdot) > 0$.
\end{theorem}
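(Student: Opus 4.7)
The plan is to upper bound $l_2$ using the Lipschitz continuity of $\vec{h}$ together with continuous dependence of the closed-loop flow on its initial data, lower bound $l_1$ via positive definiteness of $Q$ and a reverse-Gronwall estimate, and then choose $\alpha$ large enough and $w$ small enough that $l_1$ dominates $w l_2$ pointwise in $t$ along the trajectory emanating from the given nonzero initial state.

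First, letting $L$ be a Lipschitz constant for $\vec{h}$, I would write $\|\vec{h}(\vec{x}^{+i}(t)) - \vec{h}(\vec{x}^{-i}(t))\|^2 \le L^2 \|\vec{x}^{+i}(t) - \vec{x}^{-i}(t)\|^2$. The closed-loop vector field $\vec{x} \mapsto A\vec{x} - B[R^{-1} + S(\vec{x})]B^T P \vec{x}$ is Lipschitz on any bounded region with some constant $K$, since $S(\vec{x})$ is uniformly bounded and the prefactor $e^{-k_{2_i}/\|\vec{x}\|}$ is smooth with all derivatives vanishing at the origin. A standard Gronwall argument applied to $\vec{x}^{+i} - \vec{x}^{-i}$ then yields $\|\vec{x}^{+i}(t) - \vec{x}^{-i}(t)\| \le 2\epsilon e^{Kt}$. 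Plugging this into \eqref{l1_l2_costs} gives $l_2(t) \le n L^2 e^{(2K - \alpha) t}$.

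Next I would bound $l_1$ from below. Since the closed-loop field vanishes at the origin and is $K$-Lipschitz, $\tfrac{d}{dt}\|\vec{x}\|^2 = 2\vec{x}^T\dot{\vec{x}} \ge -2K\|\vec{x}\|^2$ integrates to $\|\vec{x}(t)\|^2 \ge \|\vec{x}_0\|^2 e^{-2Kt}$, so $l_1(t) \ge \lambda_{\min}(Q)\|\vec{x}_0\|^2 e^{-2Kt}$. Now picking $\alpha \ge 4K$ forces $l_2(t) \le n L^2 e^{-2Kt}$, and then choosing any $w \in \bigl(0,\, \lambda_{\min}(Q)\|\vec{x}_0\|^2/(nL^2)\bigr)$ gives
$$l_1(t) - w\, l_2(t) \;\ge\; e^{-2Kt}\bigl[\lambda_{\min}(Q)\|\vec{x}_0\|^2 - wnL^2\bigr] \;>\; 0$$
for all $t \ge 0$, which is the desired inequality.

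The hard part will be the Lipschitz bookkeeping around the unusual term $e^{-k_{2_i}/\|\vec{x}\|}$ in $S(\vec{x})$: it is smooth away from $\vec{x}=0$ and vanishes with all derivatives at the origin, but producing a single global Lipschitz constant $K$ on the (bounded) closed-loop trajectory requires a short argument. Once this is settled, every other step reduces to routine Gronwall and eigenvalue estimates, and the resulting dependence of the admissible $w$ on $\|\vec{x}_0\|$ is consistent with the theorem's existence (rather than uniform) claim.
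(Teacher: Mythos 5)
Your proof is correct and follows the same overall skeleton as the paper's: use the Lipschitz constant $L$ of $\vec{h}$ to bound $l_2$ by $L^2 e^{-\alpha t}$ times the squared distance between the perturbed trajectories, lower-bound $\vec{x}^T Q\vec{x}$ by $\lambda_{\min}(Q)$ times a decaying exponential in $\|\vec{x}_0\|^2$, and then pick $\alpha$ large and $w \lesssim \lambda_{\min}(Q)\|\vec{x}_0\|^2/(nL^2)$. The one genuine difference is how the two exponential envelopes are produced. The paper treats the closed loop as linear, writing $\vec{x}^{\pm i}(t) = e^{\bar{A}t}(\vec{x}_0\pm\epsilon\vec{e}_i)$, so that $\sum_i\|\vec{x}^{+i}-\vec{x}^{-i}\|^2 = 4\epsilon^2\|e^{\bar{A}t}\|_F^2 \le 4\epsilon^2 n K^2 e^{-2at}$ and $\|e^{\bar{A}t}\vec{x}_0\|^2 \ge \sigma_{\min}^2(e^{\bar{A}t})\|\vec{x}_0\|^2$, leading to the conditions \eqref{w_cnds} and $\alpha \ge \sigma_{\max}^2(\bar{A})-2a$; this is exact only for the baseline LQR loop and ignores the state-dependent term $S(\vec{x})$ in the actual feedback \eqref{TV_opt_ctrl}. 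Your Gronwall/reverse-Gronwall argument replaces both matrix-exponential estimates with a single Lipschitz constant $K$ of the full nonlinear closed-loop field on the (bounded, by Proposition~\ref{addTerm}) invariant region containing all $2n+1$ trajectories, which buys you a proof that is honest about the nonlinearity of the augmented loop at the cost of a cruder growth rate $e^{Kt}$ in place of the decaying $Ke^{-at}$, hence the larger requirement $\alpha\ge 4K$. The only bookkeeping you should make explicit is the one you already flag: $S_{ii}(\vec{x})$ is bounded by $k_{1_i}$ and Lipschitz (including at the origin, where $\|\vec{x}\|$ is $1$-Lipschitz and $r\mapsto e^{-k_{2_i}/r}$ has bounded derivative for $k_{2_i}>0$), so $\vec{x}\mapsto S(\vec{x})\vec{x}$ is Lipschitz on bounded sets and vanishes at $\vec{x}=0$; with that, both your Gronwall steps are legitimate and the existence claim of the theorem follows, with the same dependence of $w$ on $\|\vec{x}_0\|$ as in the paper.
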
 

\begin{proof} 
Since the output function, $\vec{h}(\vec{x})$, is Lipschitz continuous, there exists $L$ such that
\begin{equation} \label{lipsch_cnd}
 \left\| \vec{h}(\vec{x}^{+i}(t))-\vec{h}(\vec{x}^{-i}(t)) \right\|^2 \leq L^2 \left\| \vec{x}^{+i}(t) - \vec{x}^{-i}(t) \right\|^2, \ \ \forall t\,.
\end{equation} 
As $\bar{A}$ (the ``$A$-matrix'' of the closed loop system) is stable,
$$ \vec{x}^{\pm i}(t) = e^{\bar{A} t} (\vec{x}_0 \pm \epsilon \vec{e}_i) \,.$$
Therefore,
$$ \left\| \vec{x}^{+i} - \vec{x}^{-i} \right\|^2 = \left\| 2 \epsilon e^{\bar{A} t} \vec{e}_i \right\|^2 = 4 \epsilon^2 \left\| e^{\bar{A} t} \vec{e}_i \right\|^2 \,.$$
Using the Lipschitz condition \eqref{lipsch_cnd}, we have
\begin{equation}
	\begin{aligned}
	 	l_2(\cdot) &= \frac{e^{-\alpha t}} {4\epsilon^2} \sum\limits_{i=1}^n  \left\| \vec{h}(\vec{x}^{+i})-\vec{h}(\vec{x}^{-i}) \right\|^2 \\
&\leq \frac{L^2 e^{-\alpha t}} {4\epsilon^2} \sum\limits_{i=1}^n  \left\| \vec{x}^{+i} - \vec{x}^{-i} \right\|^2 = L^2 e^{-\alpha t} \sum\limits_{i=1}^n \left\| e^{\bar{A} t} \vec{e}_i \right\|^2 = L^2 e^{-\alpha t} \left\| e^{\bar{A} t} \right\|^2_F  \,.
	 \end{aligned}
\end{equation}
Furthermore, $ \left\| e^{\bar{A} t} \right\|_F \leq \sqrt{n}  \left\| e^{\bar{A} t} \right\|_2$; as such, we have
$$ l_2(\cdot) \leq n L^2 e^{-\alpha t} \left\| e^{\bar{A} t} \right\|^2_2 \,.$$
Again by stability of $\bar{A}$, it also follows that,
$$\exists K, a >0,\ \ \left\| e^{\bar{A} t} \right\|_2 \leq K e^{-a t}, \ \ \forall t \geq 0\,.$$ 
If we pick $w$ such that
\begin{equation} \label{w_cnds}
	 	w \leq \frac{\lambda_{\min}(Q) \|\vec{x}_0\|^2}{n L^2 K^2}\,,
\end{equation}
then
$$ w l_2(\cdot) \leq \lambda_{\min}(Q) \|\vec{x}_0\|^2 e^{-(\alpha+2a)t}\,.$$
We also know that
$$ \vec{x}^T Q \vec{x} = \vec{x}_0^T e^{\bar{A}^T t} Q e^{\bar{A} t} \vec{x}_0 = \left\| Q^{\frac{1}{2}} e^{\bar{A} t} \vec{x}_0 \right\|^2 \geq \lambda_{\min}(Q) \left\| e^{\bar{A} t} \vec{x}_0 \right\|^2\,.$$
Thus,
\begin{equation}
	\begin{aligned}
	\vec{x}^T Q \vec{x} - w l_2(\cdot) &\geq \left\| Q^{\frac{1}{2}} e^{\bar{A} t} \vec{x}_0 \right\|^2 - \lambda_{\min}(Q) \|\vec{x}_0\|^2 e^{-(\alpha+2a)t} \\
	&\geq \lambda_{\min}(Q) \left( \left\| e^{\bar{A} t} \vec{x}_0 \right\|^2 - \|\vec{x}_0\|^2 e^{-(\alpha+2a)t} \right) \,.
	 \end{aligned}
\end{equation}
Now we can select $\alpha$ to satisfy the following condition:
\begin{equation} \label{alpha_cnds}
	\alpha \geq \sigma_{\max}^2(\bar{A}) - 2a \,.
\end{equation}
Then,
$$ \vec{x}^T Q \vec{x} - w l_2(\cdot) \geq 0 \,.$$
Using \eqref{w_cnds} and \eqref{alpha_cnds}, we now have
$$ l_1(\cdot) - w l_2(\cdot) = \{ \vec{x}^T Q \vec{x} - w l_2(\cdot)\} + \vec{u}^T R \vec{u} \geq \vec{u}^T R \vec{u} > 0 \,,$$ for nonzero $\vec{u},$ concluding the proof.
\end{proof} 
Note that since we do not know the exact values of $\bar{A}$ and $\vec{x}_0$ at the beginning, we need to have an initial guess of these values and set $w$ and $\alpha$. The values of $l_1$ and $l_2$ need to be monitored, and if $l_1 - w l_2 <0$, we need to update the values of $w$ and $\alpha$ until they meet the condition.

The optimal control $\vec{u}^*$ is hence obtained from solving
\begin{equation}
\begin{aligned}
& \underset{}{\text{minimize}}
& & \int_{0}^{t_f} \left\{ l_1(\vec{x},\vec{u})- w l_2(t,\vec{x},\vec{x}^{\pm 1}, \cdots, \vec{x}^{\pm n}) \right\} \mathrm{d}t  \\
& \text{subject to}
& & \vec{u}  = -\left[ R^{-1} + S(\vec{x}) \right] B^T P \vec{x} \,. 
\end{aligned} \label{CostTran}
\end{equation}
By substituting $\vec{u}  = -\left[ R^{-1} + S(\vec{x}) \right] B^T P \vec{x}$ into \eqref{CostTran}, the cost functional can be approximated as,
\begin{equation}
   J(\vec{k}_1, \cdots, \vec{k}_4)  = \int_{0}^{t_f} \Gamma(t,\vec{x},\vec{x}^{\pm 1}, \cdots, \vec{x}^{\pm n}, \vec{k}_1, \cdots, \vec{k}_4) \mathrm{d}t \,, \label{Cost}
\end{equation}
where
\begin{equation} \label{Gamma}
\Gamma(\cdot) = \vec{x}^T \left( P B (R^{-1} + 2S +S R S) B^T P + Q \right) \vec{x} - w l_2(t,\vec{x},\vec{x}^{\pm 1}, \cdots, \vec{x}^{\pm n}) \,.
\end{equation}
Considering the above nonlinear system, 
the following optimization problem can now be considered for control synthesis:
\begin{equation} 
\begin{aligned}
& \underset{\vec{k}_1, \cdots, \vec{k}_4}{\text{minimize}}
& & J(\vec{k}_1, \cdots, \vec{k}_4) = \int_{0}^{t_f} \Gamma(t,\vec{x},\vec{x}^{\pm 1}, \cdots, \vec{x}^{\pm n},\vec{k}_1, \cdots, \vec{k}_4) \, \mathrm{d}t \label{ModCost}\\
& \text{subject to}
& & \dot{\vec{x}}=\left[ A - B (R^{-1} + S) B^T P \right] \vec{x}, \ \ \vec{x}(0) = \vec{x}_0,\\
& 
& & \dot{\vec{x}}^{\pm k}=\left[ A - B (R^{-1} + S) B^T P \right] \vec{x}^{\pm k}, \ \ \vec{x}^{\pm k}(0) = \vec{x}_0 \pm \epsilon \vec{e}_k\,, \ \ k=1,\cdots,n\,.
\end{aligned} 
\end{equation}
In the next section, we delve into designing an algorithm for
obtaining the parameters $\vec{k}_1, \cdots, \vec{k}_4$ in order to solve~\eqref{ModCost}.
\subsection{Optimization Algorithm}
In this section, an algorithm is presented to solve \eqref{ModCost} to determine the control policy for the control objective with an embedded observability index. Here, a recursive gradient-based algorithm is devised to find the solution to this optimization problem. 

Given \eqref{Cost}, we first define a new variable, $x_{n+1}$, as
\begin{equation}
    x_{n+1}(t)= \int_{0}^{t} \Gamma(\tau,\vec{x}(\tau),\vec{x}^{\pm 1}(\tau), \cdots, \vec{x}^{\pm n}(\tau),\vec{k}_1, \cdots, \vec{k}_4) \, d\tau \,.  \label{NewVar}
\end{equation}
Assuming that $\vec{x}(0)=\vec{x}_0$ is given, it is clear that $x_{n+1}(0) = 0$ and $x_{n+1}(t_f) = J(\cdot)$. 
Next, define an augmented state vector as 
$$\vec{\bar{x}}(t)= \begin{bmatrix} \vec{x}(t) \\ \vec{x}^{\pm 1}(t) \\ \vdots \\ \vec{x}^{\pm n}(t) \\ x_{n+1}(t) \end{bmatrix}\,.$$ 
Then,
\begin{equation}
\dot{\bar{\vec{x}}} = \begin{bmatrix} \dot{\vec{x}} \\ \dot{\vec{x}}^{\pm 1} \\ \vdots \\ \dot{\vec{x}}^{\pm n} \\ \dot{x}_{n+1} \end{bmatrix} = \mathbb{H}(t,\vec{\bar{x}},\vec{k}_1, \cdots, \vec{k}_4)\,; \ \ \vec{\bar{x}}(0) = \begin{bmatrix} \vec{x}_0 \\ \vec{x}_0 \pm \epsilon \vec{e}_1 \\ \vdots \\ \vec{x}_0 \pm \epsilon \vec{e}_n \\  0 \end{bmatrix}, \label{AugStateEq}
\end{equation}
where,
\begin{equation}
	\mathbb{H}(t,\vec{\bar{x}},\vec{k}_1, \cdots, \vec{k}_4) = \begin{bmatrix} 
\begin{bmatrix}  A - B (R^{-1} + S) B^T P  &0&0 \\ 0&\ddots& 0 \\ 0&0&  A - B (R^{-1} + S) B^T P \end{bmatrix} 
\begin{bmatrix} \vec{x} \\ \vec{x}^{\pm 1} \\ \vdots \\ \vec{x}^{\pm n} \end{bmatrix}
\\ \Gamma(t,\vec{x}, \vec{x}^{\pm 1}, \cdots, \vec{x}^{\pm n},\vec{k}_1, \cdots, \vec{k}_4) 
\end{bmatrix}\,,
\end{equation}
and $\Gamma(\cdot)$ is given in \eqref{Gamma}. The optimization problem can now be formulated as:
\begin{equation}
\begin{aligned}
& \underset{\vec{k}_1, \cdots, \vec{k}_4}{\text{minimize}}
& & J = x_{n+1}(t_f) \\
& \text{subject to}
& & \dot{\bar{\vec{x}}} = \mathbb{H}(t,\vec{\bar{x}},\vec{k}_1, \cdots, \vec{k}_4).
\end{aligned}
\end{equation}

In order to solve this optimization problem, the gradient $\displaystyle \frac{\partial J}{\partial \vec{k}_j} = \left. \frac{\partial x_{n+1}}{\partial \vec{k}_j} \right |_{t=t_f}$, is required~\cite{Esfahani11}.
By defining $\displaystyle \bar{X}_{\vec{k}_j} = \frac{\partial \vec{\bar{x}}}{\partial \vec{k}_j}$, and applying the chain rule, we have
\begin{eqnarray}
&&    \dot{\bar{X}}_{\vec{k}_j}= \frac{\partial \mathbb{H}}{\partial \vec{\bar{x}}} \frac{\partial \vec{\bar{x}}}{\partial \vec{k}_j} +\frac{\partial \mathbb{H}}{\partial \vec{k}_j} =  \frac{\partial \mathbb{H}}{\partial \vec{\bar{x}}} \bar{X}_{\vec{k}_j} +\frac{\partial \mathbb{H}}{\partial \vec{k}_j},  \nonumber \\
&&     \bar{X}_{\vec{k}_j}(0) = \vec{0}\,, \ \ j=1,\cdots,4 \,.\label{XK_0_dot}
\end{eqnarray}
Here, $\frac{\partial \mathbb{H}}{\partial \vec{\bar{x}}}$ and $\frac{\partial \mathbb{H}}{\partial \vec{k}_j}$ can be computed from the given form of $\Gamma(\cdot)$. Thereby, the matrix differential equation \eqref{XK_0_dot} becomes a time varying linear ordinary differential equation that should be solved along with \eqref{AugStateEq} for $0 <t<t_f$. The last row of $\bar{X}_{\vec{k}_j}, \ \ j=1,\cdots,4$ at $t=t_f$ contains the gradient vectors which can be used for improving the parameter $\vec{k}_j$. As such, one can adopt a gradient descent algorithm for obtaining the optimal gains $\vec{k}_j^*$. Note that we can also use an estimate of the Hessian matrix to make sure that the optimal point is a minima of the objective (and not a saddle point), e.g.,\, by adopting a finite difference approximation for estimating the Jacobian presented in \cite{spall2000adaptive,alaeddini2017application}. 
\section{Illustrative Examples}
\label{sec:simulation}

In this section, we consider two examples that demonstrate the application of the results discussed in the paper. The first example pertains to a suitably constructed nonlinear system; the second example pertains to a network of linear systems over an undirected graph.

In the first example, the control synthesis procedure proposed in this paper is illustrated on a (simple) holonomic system with a nonlinear measurement. The system dynamics is given by
\begin{equation}
\begin{aligned}
   \dot{x}_1& = u_1\,, \\
   \dot{x}_2& = u_2\,.  \label{holonomic}
\end{aligned}
\end{equation}
In this case, we have a linear system $\dot{\vec{x}} = A\vec{x}+B\vec{u}$, where, $A=0$ and $B=I$. Since
$\text{rank} \left(B\right) = 2$ the system is controllable.
By choosing $Q=R=I$ and solving the algebraic Riccati equation $\displaystyle A^T P+P A-P B R^{-1} B^T P+Q=0$, we obtain $P=I$. Thus, the control policy $\displaystyle \vec{u}_{\text{LQR}} = -R^{-1}B^TP\vec{x}=-\vec{x}$ asymptotically stabilizes \eqref{holonomic}. Utilizing this state feedback controller requires full state estimation, which necessitates an observable system. 
In the meantime, if for any portion of the trajectory the state is not observable, we have no means of implementing the LQR-based
state feedback controller. 
Due to this coupling between sensing and control, we can investigate the observability of the system after applying the controller. 
Assume that the position of the vehicle is continuously measured by an omni-directional camera centered at the origin (bearing measurement). Then, the output function is given as,\footnote{We note that the measurement function in this example, $y$, is undefined in case of $x_1 = 0$. Here, we have assumed that $x_1 \neq 0, \ \  \forall t$.}
\begin{equation} \label{NL_obs}
	y = \frac{x_2}{x_1},
\end{equation}
where $y \in \mathbb{R}$ is a bearing only measurement, providing information on 
the direction of the vehicle but not on the range. 
The observability matrix is now given by
\begin{equation}
	d\mathcal{O} =\frac{\partial}{\partial \vec{x}} \begin{bmatrix} y \\ \dot{y} \\ \vdots \end{bmatrix} = \begin{bmatrix} -\frac{x_2}{x_1^2} & \frac{1}{x_1} \\ 2\frac{x_2}{x_1^3}u_1-\frac{1}{x_1^2}u_2 & -\frac{1}{x_1^2}u_1 \\ \vdots & \vdots \end{bmatrix}.
\end{equation}
The system is locally observable if the observability matrix $d\mathcal{O}$ is full rank \cite{nijmeijer1990nonlinear}. The observability matrix for the nonlinear system, however, has infinite rows. Here, we can show that there exists a set of control inputs such that the observability matrix becomes full rank. Let us consider the first two rows of the observability matrix; then we have
$$ \det(d\mathcal{O}) = \frac{1}{x_1^3} \left( u_2 -\frac{x_2}{x_1}u_1 \right) \,.$$
Hence when $u_2 \neq \frac{x_2}{x_1}u_1$, the observability matrix is full rank, and as a result, the system is observable. 

Now, if we substitute the optimal control obtained from LQR, $\displaystyle \vec{u}_{\text{LQR}} =-\vec{x}$, we have 
\begin{equation}
	\dot{y} = \frac{\dot{x}_2 x_1- \dot{x}_1 x_2}{x_1^2}= \frac{-x_1x_2+x_1x_2}{x_1^2} = 0.
\end{equation}
All other differential terms are zero ($\ddot{y} = y^{(3)} = \hdots = 0$).
Thereby by applying the control $\vec{u}_{\text{LQR}} = -\vec{x}$, the observability matrix is rank deficient, and the system is not observable with this choice of control. This phenomena is well known in computer vision and stems from the fact that we cannot ``observe" range between the camera and the vehicle from bearing measurements only. Therefore, while we have optimal controls for this type of linear system, the nonlinear measurement needs to be managed appropriately for observability.

Now we augment the observability index to the optimization problem. Assume that the instantaneous ``observability" index is given by:
$$ l_2(\cdot) = \frac{e^{-\alpha t}} {4\epsilon^2} \left( \left( y^{+1}(t)-y^{-1}(t) \right)^2 + \left( y^{+2}(t)-y^{-2}(t) \right)^2 \right) \,.$$
The modified optimal control is thus given by
\begin{equation}
   u_i(\vec{x})  = -\left(1+k_{1_i} e^{-\frac{k_{2_i}}{\|\vec{x}\|}} \sin^2(k_{3_i} \|\vec{x}\| + k_{4_i}) \right) x_i \,, \ \  i=1,2\,. \label{ModCtrl}
\end{equation}
The resulting trajectories for the initial condition $\begin{bmatrix} \pm 4, \pm 4 \end{bmatrix}$ are depicted in \cref{ObsTrajs}. The optimal controller chooses a trajectory that is not necessarily the closest path to the origin. The observability-based optimal control initially keeps the system away from an unobservable trajectory (that would have led to moving directly towards the origin), while keeping the shortest path to the origin, and eventually returning to the desired equilibrium. Although, the observability-based controller chooses the longer path, it guarantees the system observability in order to estimate the state information required for implementing the state feedback control.
\begin{figure}[!h]
   \begin{center}
  	 {\includegraphics[width=.5\textwidth]{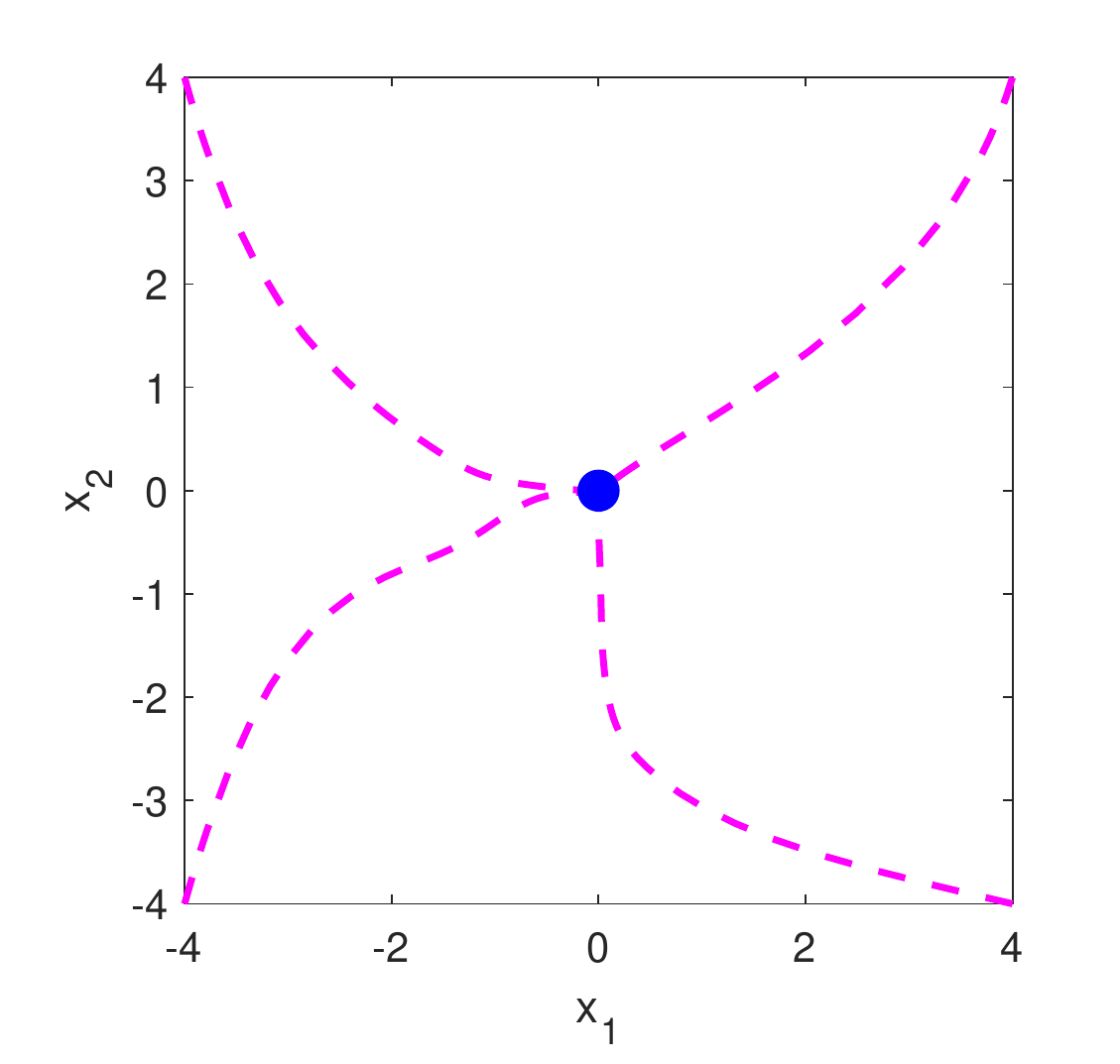}}
   \end{center}
	\vspace{-8mm}
   \caption{Trajectories for the system \eqref{holonomic} and measurement \eqref{NL_obs} with observability-based optimal control.}
   \label{ObsTrajs}
\end{figure}

Here, we compare the optimal LQR control without the observability term and the modified optimal control given in \eqref{ModCtrl} that considers the observability of the system. The optimal controls for these two scenarios are given in \cref{ObsCtrls}. As it can be seen in this figure, initially the modified optimal control deviates from the optimal LQR control, however, over time, the observability term becomes less dominant, and the absolute values of the observability-based control take the lead with respect to their corresponding LQR optimal control to satisfy the asymptotic stability condition. 
\begin{figure}[!h]
   \begin{center}
  	 {\includegraphics[width=.5\textwidth]{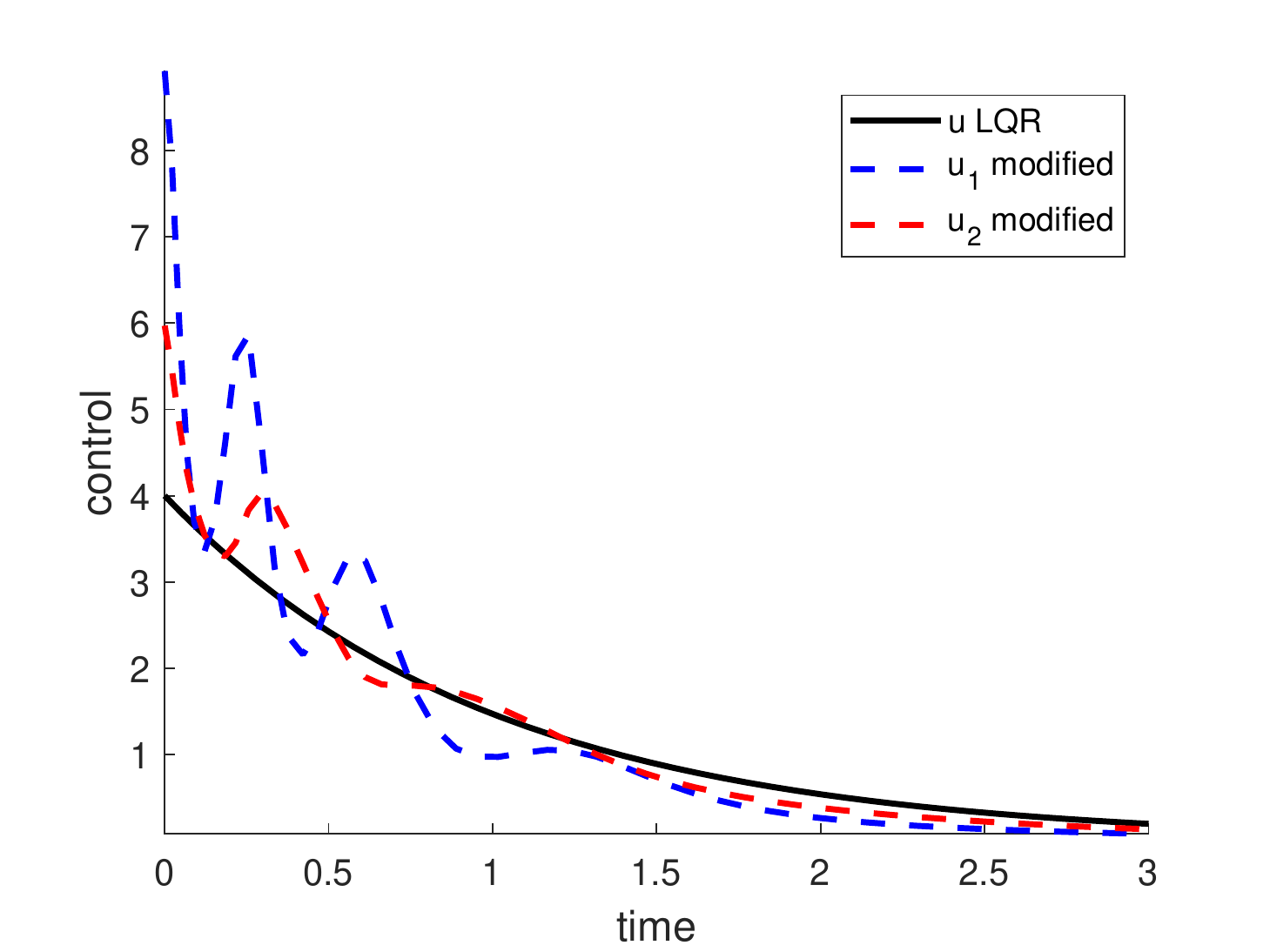}}
	 \vspace{-6 mm}
   \end{center}
   \caption{Optimal controls for the system \eqref{holonomic} and measurement \eqref{NL_obs} using LQR optimal state feedback control without (solid line) and with (dashed lines) observability-based optimal control.}
   \label{ObsCtrls}
\end{figure}

To further demonstrate the utility of the proposed observability-based control synthesis procedure, an extended Kalman filter (EKF) is considered to estimate the states of this linear system with a nonlinear measurement dynamics. The EKF is implemented in MATLAB in accordance with the presentation in~\cite{crassidis2011optimal}. Estimates are initialized to $\vec{\hat{x}}_0 = 1.25 \vec{x}_0$ with an initial covariance matrix of $P = I$. Simulation results using the augmented observability-based control are shown in \cref{estimate_optimal}. As shown in this figure, state estimates converge to the true state values for the proposed augmented feedback. In contrast, results for the LQR control are shown in \cref{estimate_LQR}. As explained above, since this system is not observable with the LQR control, the corresponding estimator does not have desirable convergence property.
\begin{figure}[!h]
  \centering
  \subfloat[]{\includegraphics[width=.49\linewidth]{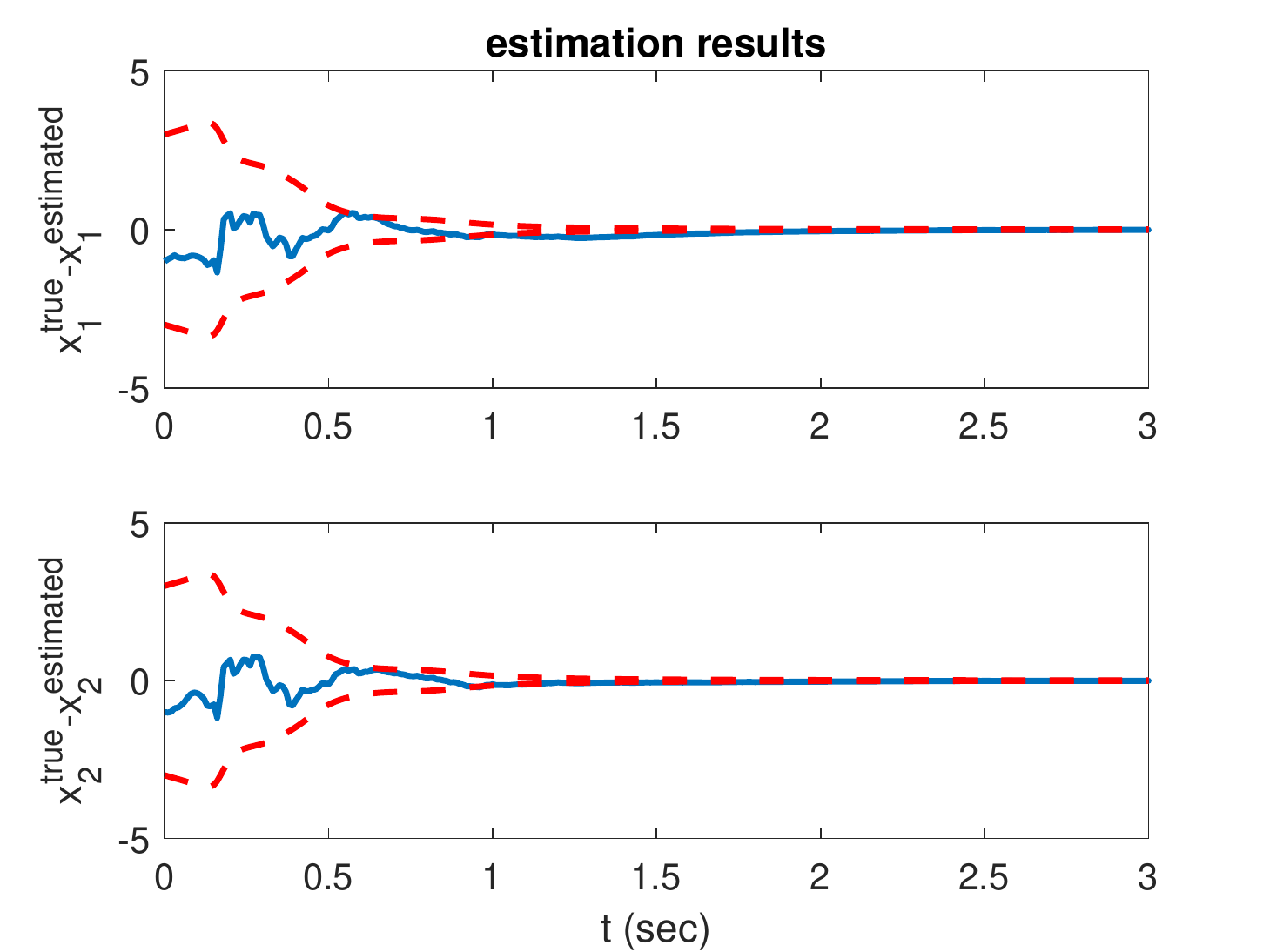} \label{EKF_result}}
	\hfill  
\subfloat[]{\includegraphics[width=.49\linewidth]{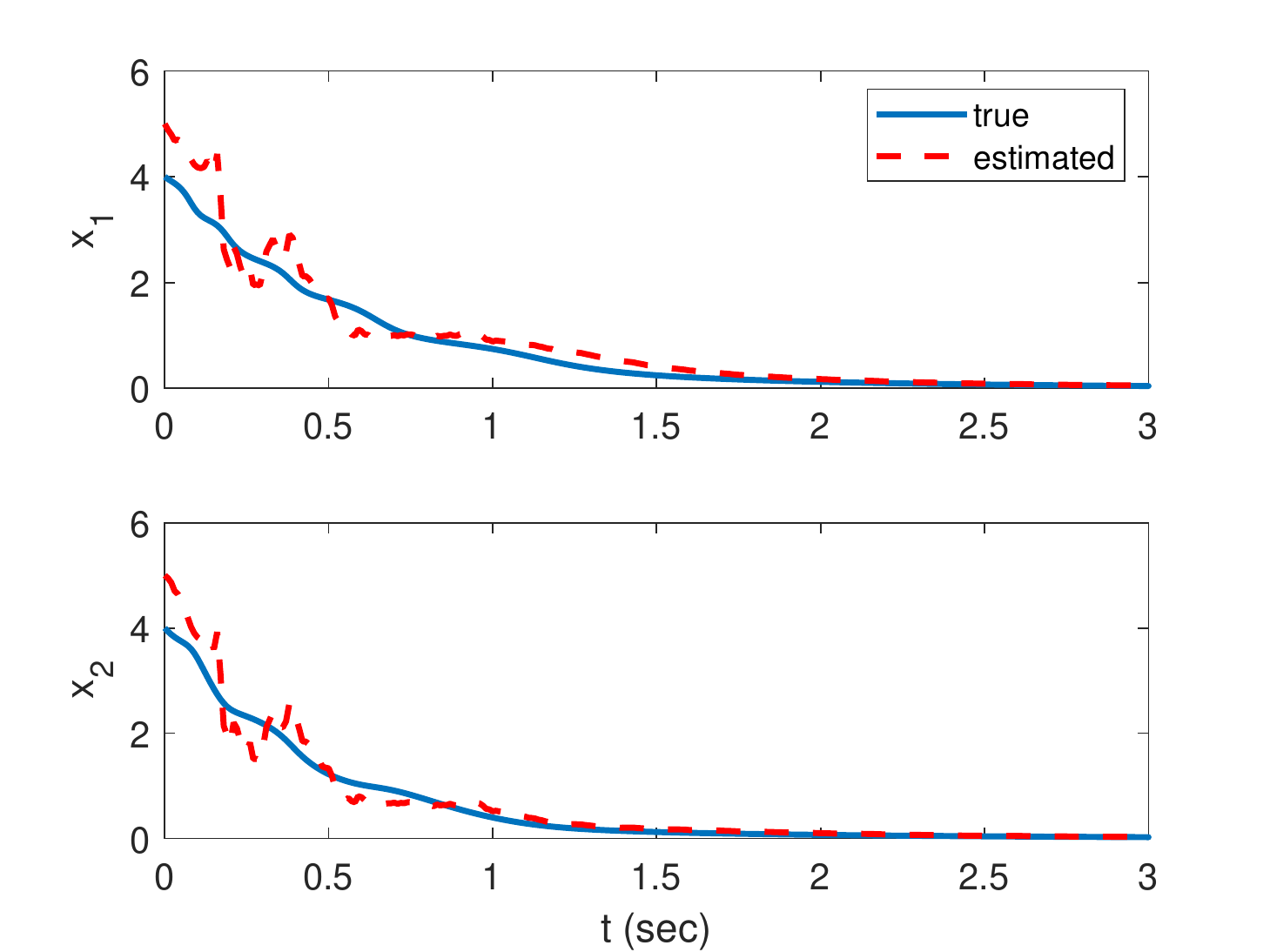} \label{EKF_states_est}}
\caption{State trajectories and estimates for the augmented control. (a) States estimation errors and $3\sigma$ bounds from the EKF estimator. (b) True and estimated states.}
\label{estimate_optimal}
\end{figure}

\begin{figure}[!h]
  \centering
  \subfloat[]{\includegraphics[width=.49\linewidth]{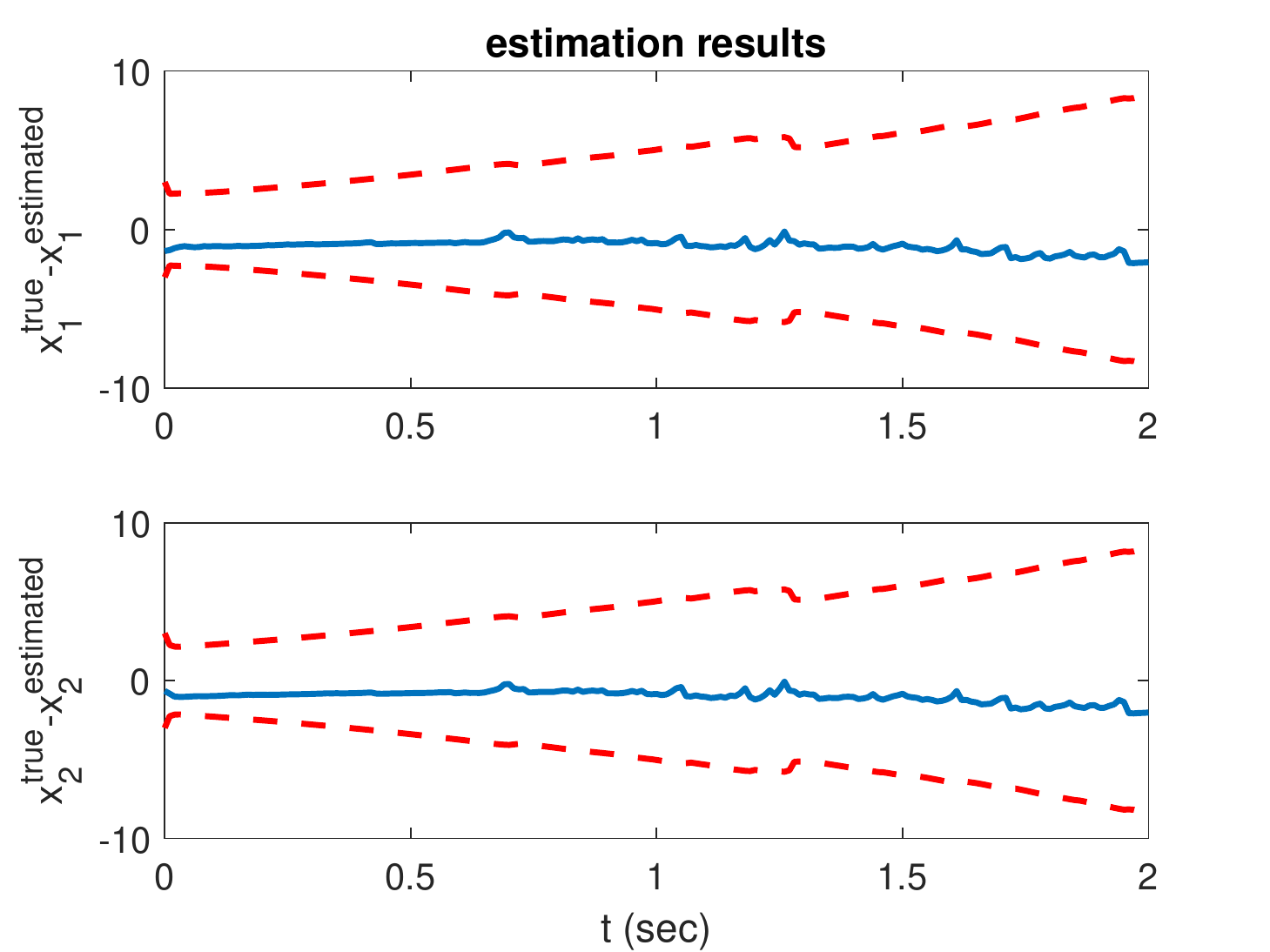} \label{EKF_result_LQR}}
	\hfill  
\subfloat[]{\includegraphics[width=.49\linewidth]{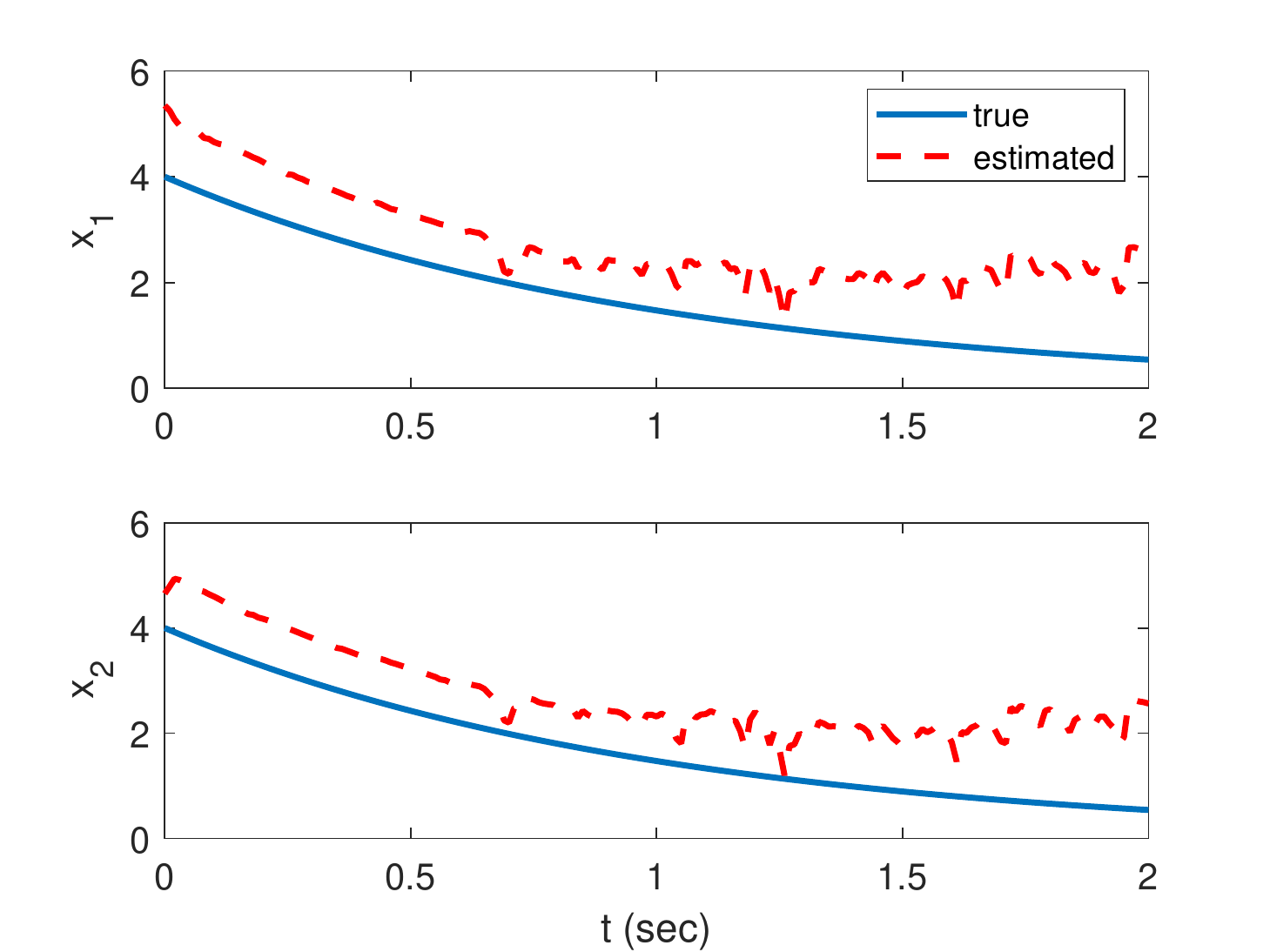} \label{EKF_states_est_LQR}}
\caption{State trajectories and estimates for the baseline LQR control. (a) States estimation errors and $3\sigma$ bounds from the EKF estimator. (b) True and estimated states.}
\label{estimate_LQR}
\end{figure}

We note that the above example can be extended to have any system matrix $A$, leading to an unobservable system as long as the control input remains linear. 
In the next example we consider the consensus problem over an undirected network of three dynamic agents on a complete graph. We assume that there is control on all nodes of this network. The corresponding dynamics can be written as~\cite{mesbahi2010graph},
$$ \dot{\vec{x}} = \begin{bmatrix} -2 & 1 & 1 \\ 1 & -2 & 1 \\ 1 & 1 & -2 \end{bmatrix} \vec{x} + \begin{bmatrix} 1  \\ 1 \\ 1 \end{bmatrix} u\,. $$
If we solve the optimal LQR control for $Q=I$ and $R=1$, we have
$$ u_{\text{LQR}} = -0.5774 x_1 - 0.5774 x_2 - 0.5774 x_3 \,.$$ 
We assume that the agents are equipped with a compass and a sensor which can measure bearing to nearby agents. In our 3-node example, the first agent observes the relative value of the other nodes with respect to its own state.
Thus, the observation of the first agent is given by,
$$ \vec{y}^1 = \begin{bmatrix} \frac{x_2}{x_1} & \frac{x_3}{x_1} \end{bmatrix}^T \,.$$
It can be shown that this network with bearing only measurement becomes unobservable under the linear time invariant LQR control. We can show that there always exists some initial condition for which the network with bearing only measurements is unobservable. Assume for example that $\vec{x}(0) = \vec{v}$, where $\vec{v}$ is an eigenvector of $A-B K$, and $K = \begin{bmatrix} 0.5774 & 0.5774 & 0.5774 \end{bmatrix}$ is the linear state feedback. Then
$\dot{\vec{x}} = \lambda \vec{x},$ where $\lambda$ is the eigenvalue associated with eigenvector $\vec{v}$. Thus, $\vec{x}(t) = e^{\lambda t} \vec{x}(0)$, and 
$$ \frac{x_j (t)}{x_k (t)} = \frac{e^{\lambda t} x_j(0)}{e^{\lambda t} x_k(0)} = \frac{ x_j(0)}{x_k(0)} = \mbox{constant}. $$
Therefore, all time derivatives of the output is zero, i.e., $\frac{d^i}{dt^i} \left( \frac{x_j}{x_k} \right) =0$ for all $i \geq 1$, and since the number of measurements for each node (which is equal to the number of neighbors of that node) is always less than the dimension of the  system (which is equal to the number of all nodes), the observability matrix cannot be full rank and agents are not able to reconstruct the states that are required for utilizing feedback control. 

In both cases mentioned above, the states can not be reconstructed using bearing measurements only. More generally, it can be shown that any system with linear dynamics can be made to be unobservable with arbitrary linear control inputs for some nonlinear measurement. Assume for example that we have a linear dynamics of the form $\dot{\vec{x}} = A \vec{x} + B \vec{u}$ with a linear control inputs $\vec{u} = K \vec{x}$ and a nonlinear observation $\vec{y}=\vec{h}(\vec{x})$, $\vec{y} \in \mathbb{R}^m$. Now let $\vec{z} = \vec{h}(\vec{x})$; if we consider the set of nonlinear observations whose time derivatives can be written as $\dot{\vec{y}} = H_1 \vec{x} + H_2 \vec{z}$, then we have
\begin{equation}
\begin{aligned}
 \begin{bmatrix} \dot{\vec{x}} \\ \dot{\vec{z}} \end{bmatrix} = \begin{bmatrix} A + B K & 0 \\ H_1 & H_2 \end{bmatrix} \begin{bmatrix} \vec{x} \\ \vec{z} \end{bmatrix} \,, \quad 
 \vec{y} = \begin{bmatrix} 0 & 1 \end{bmatrix} \begin{bmatrix} \vec{x} \\ \vec{z} \end{bmatrix} \,.
\end{aligned}
\end{equation}
For this time-invariant system, there is a convenient observability test. In particular, the rank of the observability matrix,
$$ \mathcal{O} = \begin{bmatrix} 0 & 1 \\ H_1 & H_2 \\ H_2 H_1 + H_1 (A + B K) & H_2^2 \\ H_2^2 H_1 + H_2 H_1 (A + B K) + H_1 (A + B K)^2 & H_2^3 \\ \vdots & \vdots \end{bmatrix},$$
determines whether the system is observable \cite{brockett2015finite}.
We can see that if $H_1 = 0$, then the system is unobservable for any linear state feedback gain $K$. Note that the case where $H_1 = 0$ is only one scenario where this unobservability phenomena occurs; we can find other combinations of the terms $H_1$, $H_2$ that make this system unobservable for a linear state feedback. Needless to say, the same phenomena also occurs for other classes of nonlinear measurements. 
As such, the synthesis procedure detailed in this paper becomes pertinent for a large class of control systems consisting of a linear state dynamics augmented with nonlinear observations.
\section{Conclusion}
\label{sec:conclude}

This paper is concerned with modifying the optimal control for a linear system with nonlinear measurements based on a nonlinear observability criteria. In this direction, the exponential discounted form of the empirical observability Gramian has been used for improving the local observability for this class of nonlinear systems. We proposed a modified cost function that contains a term that determines asymptotic behavior of the system (similar to the conventional LQR) in addition to a transient term responsible for maximizing a notion of nonlinear observability. Hence the augmented cost function becomes a combination of quadratic and non-quadratic terms, motivated by the desire to maximize the observability of the nonlinear system.
We then considered the stability of this augmented closed loop system; in particular,
 we proposed an oscillatory feedback control to increase stability properties of the feedback system while also improving the observability of the nonlinear system. 
A method was then proposed that relies on superimposing a transient time-varying oscillatory term on a stabilizing controller;
it is shown that the addition of this transient term does not affect the asymptotic stability of this class of nonlinear systems.

\section*{ACKNOWLEDGMENTS}
The authors thank Bill and Melinda Gates for their active support and their sponsorship through the Global Good Fund. The research of M. Mesbahi has been supported by ONR grant N00014-12-1-1002 and NSF grant SES-1541025. Constructive suggestions and comments by the Associate Editor and reviewers of this manuscript are also acknowledged.

\section*{References}

\bibliography{citations}

\begin{thebibliography}{10}
\expandafter\ifx\csname url\endcsname\relax
  \def\url#1{\texttt{#1}}\fi
\expandafter\ifx\csname urlprefix\endcsname\relax\def\urlprefix{URL }\fi
\expandafter\ifx\csname href\endcsname\relax
  \def\href#1#2{#2} \def\path#1{#1}\fi

\bibitem{Alaeddini13}
A.~Alaeddini, K.~A. Morgansen, Autonomous state estimation using optic flow
  sensing, in: American Control Conference, 2013, pp. 585--590.

\bibitem{hinson2013observability}
B.~T. Hinson, K.~A. Morgansen, Observability optimization for the nonholonomic
  integrator, in: American Control Conference, 2013, pp. 4257--4262.

\bibitem{alaeddini2014trajectory}
A.~Alaeddini, K.~A. Morgansen, Trajectory design for a nonlinear system to
  insure observability, in: European Control Conference, IEEE, 2014, pp.
  2520--2525.

\bibitem{Quenzer14}
J.~D. Quenzer, K.~A. Morgansen, Observability based control in range-only
  underwater vehicle localization, in: American Control Conference, 2014, pp.
  4702--4707.

\bibitem{Baccou02}
P.~Baccou, B.~Jouvencel, Homing and navigation using one transponder for auv,
  postprocessing comparisons results with long base-line navigation, in:
  International Conference on Robotics and Automation, Vol.~4, 2002, pp.
  4004--4009.

\bibitem{Larsen00}
M.~B. Larsen, Synthetic long baseline navigation of underwater vehicles, in:
  OCEANS 2000 MTS/IEEE Conference and Exhibition, Vol.~3, 2000, pp. 2043--2050.

\bibitem{nardone1981observability}
S.~C. Nardone, V.~J. Aidala, Observability criteria for bearings-only target
  motion analysis, IEEE Transactions on Aerospace and Electronic Systems~(2)
  (1981) 162--166.

\bibitem{Hammel85}
S.~Hammel, V.~Aidala, Observability requirements for three-dimensional tracking
  via angle measurements, IEEE Transactions on Aerospace Electronic Systems 21
  (1985) 200--207.

\bibitem{lorussi2001optimal}
F.~Lorussi, A.~Marigo, A.~Bicchi, Optimal exploratory paths for a mobile rover,
  in: Proceedings IEEE International Conference on Robotics and Automation,
  Vol.~2, 2001, pp. 2078--2083.

\bibitem{DeVries13}
L.~DeVries, S.~J. Majumdar, D.~A. Paley, Observability-based optimization of
  coordinated sampling trajectories for recursive estimation of a strong,
  spatially varying flowfield, Journal of Intelligent \& Robotic Systems
  70~(1-4) (2013) 527--544.

\bibitem{Yu11}
H.~Yu, R.~Sharma, R.~W. Beard, C.~N. Taylor, Observability-based local path
  planning and collision avoidance for micro air vehicles using bearing-only
  measurements, in: American Control Conference, 2011, pp. 4649--4654.

\bibitem{bohm2008avoidance}
C.~B{\"o}hm, R.~Findeisen, F.~Allgoewer, Avoidance of poorly observable
  trajectories: A predictive control perspective, IFAC Proceedings Volumes
  41~(2) (2008) 1952--1957.

\bibitem{alessandretti2013model}
A.~Alessandretti, A.~P. Aguiar, C.~N. Jones, A model predictive control scheme
  with additional performance index for transient behavior, in: Proceedings of
  the 52nd Conference on Decision and Control, 2013, pp. 5090--5095.

\bibitem{hermann1977nonlinear}
R.~Hermann, A.~J. Krener, Nonlinear controllability and observability, IEEE
  Transactions on Automatic Control 22~(5) (1977) 728--740.

\bibitem{sontag1991equations}
E.~D. Sontag, Y.~Wang, {I/O} equations for nonlinear systems and observation
  spaces, in: Proceedings of the 30th IEEE Conference on Decision and Control,
  1991, pp. 720--725.

\bibitem{Krener09}
A.~J. Krener, K.~Ide, Measures of unobservability, in: Proceedings of the 48th
  Conference on Decision and Control, held jointly with the 2009 28th Chinese
  Control Conference, 2009, pp. 6401--6406.

\bibitem{muller1972analysis}
P.~M{\"u}ller, H.~Weber, Analysis and optimization of certain qualities of
  controllability and observability for linear dynamical systems, Automatica
  8~(3) (1972) 237--246.

\bibitem{Singh05}
A.~K. Singh, J.~Hahn, Determining optimal sensor locations for state and
  parameter estimation for stable nonlinear systems, Industrial \& Engineering
  Chemistry Research 44~(15) (2005) 5645--5659.

\bibitem{Serpas13}
M.~Serpas, G.~Hackebeil, C.~Laird, J.~Hahn, Sensor location for nonlinear
  dynamic systems via observability analysis and max-det optimization,
  Computers \& Chemical Engineering 48 (2013) 105--112.

\bibitem{hinson2013path}
B.~T. Hinson, M.~K. Binder, K.~A. Morgansen, Path planning to optimize
  observability in a planar uniform flow field, in: American Control
  Conference, 2013, pp. 1392--1399.

\bibitem{teel1995non}
A.~R. Teel, R.~M. Murray, G.~C. Walsh, Non-holonomic control systems: from
  steering to stabilization with sinusoids, International Journal of Control
  62~(4) (1995) 849--870.

\bibitem{murray1993nonholonomic}
R.~M. Murray, S.~S. Sastry, Nonholonomic motion planning: Steering using
  sinusoids, IEEE Transactions on Automatic Control 38~(5) (1993) 700--716.

\bibitem{Esfahani11}
P.~Esfahani, F.~Farokhi, M.~Karimi-Ghartemani, Optimal state-feedback design
  for non-linear feedback-linearisable systems, Control Theory \& Applications,
  IET 5~(2) (2011) 323--333.

\bibitem{spall2000adaptive}
J.~C. Spall, Adaptive stochastic approximation by the simultaneous perturbation
  method, IEEE Transactions on Automatic Control 45~(10) (2000) 1839--1853.

\bibitem{alaeddini2017application}
A.~Alaeddini, D.~J. Klein, Application of a second-order stochastic
  optimization algorithm for fitting stochastic epidemiological models, 2017
  Winter Simulation Conference (WSC) (2017) 2194--2206.

\bibitem{nijmeijer1990nonlinear}
H.~Nijmeijer, A.~Van~der Schaft, Nonlinear dynamical control systems, Springer
  Science \& Business Media, 1990.

\bibitem{crassidis2011optimal}
J.~L. Crassidis, J.~L. Junkins, Optimal estimation of dynamic systems, CRC
  press, 2011.

\bibitem{mesbahi2010graph}
M.~Mesbahi, M.~Egerstedt, Graph theoretic methods in multiagent networks,
  Vol.~33, Princeton University Press, 2010.

\bibitem{brockett2015finite}
R.~W. Brockett, Finite dimensional linear systems, Vol.~74, SIAM, 2015.

\end{thebibliography}

\end{document}